\newtheorem{thm}{Theorem}
\newtheorem{lma}[thm]{Lemma} 
\newtheorem{prop}[thm]{Proposition}
\newtheorem{rem}[thm]{Remark}
\def\bar{\overline}
\def\cO{\mathcal{O}}
\def\gf{\mathrm{gf}}
\def\gh{\textup{gh}}
\def\half{\tfrac{1}{2}}
\DeclareMathOperator{\ord}{ord}
\DeclareMathOperator{\tr}{Tr}
\def\tilde{\widetilde}
\renewcommand{\phi}{\varphi}
\title{Renormalization of the spectral action\\ for the Yang--Mills system}
\author{Walter D. van Suijlekom}
\address{Institute for Mathematics, Astrophysics and Particle Physics,
Radboud University Nijmegen, Heyendaalseweg 135, 6525 AJ Nijmegen, The Netherlands}
\email{waltervs@math.ru.nl}
\date{8 February 2011}
\begin{document}

\bibliographystyle{plainmath}
\begin{abstract}
We establish renormalizability of the full spectral action for the Yang--Mills system on a flat 4-dimensional background manifold. Interpreting the spectral action as a higher-derivative gauge theory, we find that it behaves unexpectedly well as far as renormalization is concerned. Namely, a power counting argument implies that the spectral action is superrenormalizable. From BRST-invariance of the one-loop effective action, we conclude that it is actually renormalizable as a gauge theory.
\end{abstract}

\maketitle

\section{Introduction}
One of the great successes of noncommutative geometry \cite{C94} is in its application to high-energy physics. Replacing the spacetime manifold by a noncommutative manifold, one puts the full Standard Model of elementary particles on equal geometrical footing as Einstein's General theory of Relativity. This is worked out in full detail in \cite{CCM07} (see also \cite{CM07} and the companion \cite{CC10}), including the physical predictions that are a consequence of this description.

Being a geometrical description of the Standard Model that is comparable to General Relativity makes it immediate that its quantization comes with the usual problems, actually typical for the latter theory. At the moment, one works with the noncommutative manifold as setting the {\it classical} starting point -- indeed allowing for a derivation of the full Standard Model Lagrangian at the classical level. Then, one adopts the physics textbook perturbative quantization of it, and arrive at physical predictions via the known Standard Model RG-equations. It needs no stressing that the situation around its quantization should be improved, and in the present letter we intend to take a first step in this direction.

We start with the full asymptotic expansion of the spectral action of Chamseddine and Connes \cite{CC96,CC97} in the case of the Yang--Mills system on a flat background manifold. 
By naive power counting we show -- after a suitable gauge-fixing -- that the full spectral action is superrenormalizable as a higher-derivative gauge theory \cite{Sla71,Sla72b} (cf. \cite[Section 4.4]{FS80}). Then, we demonstrate that the needed counterterms are gauge invariant polynomials that can safely be added to the spectral action. This shows renormalizability of the full spectral action for the Yang--Mills action, compatibly with gauge invariance.

\section{The Yang--Mills system}
The object of study in this paper is the spectral action for the Yang--Mills (YM) system on a flat background manifold. It is given by the relatively simple formula:
$$
S[A] := \tr f(D_A/\Lambda).
$$
This {\it spectral action} has firm roots in the noncommutative geometrical description of the Yang--Mills system, we refer to \cite{CCM07} for more details. For our purposes, it suffices to know that $D_A$ is a Dirac operator with coefficients in a $SU(N)$-vector bundle equipped with a connection $A$. That is, locally we have 
$$
D_A = i \gamma^\mu  \nabla_\mu + \gamma^\mu A_\mu.
$$
with $\nabla_\mu$ the spin connection on a Riemannian spin manifold $M$. For simplicity, we take $M$ to be flat ({\it i.e.} vanishing Riemann curvature tensor) and 4-dimensional. 
Furthermore, we will assume that $f$ is a Laplace transform:
$$
f(x) = \int_{t>0} e^{-tx^2} g(t) dt,
$$
even though this assumption could be avoided by using spectral densities instead (\cite{EGV98} and also \cite[Section 8.4]{Var06})
\begin{prop}[\cite{CC97}]
In the above notation, there is an asymptotic expansion (as $\Lambda \to \infty$):
\begin{equation}
\label{sa-eym}
S[A]  \sim \sum_{m \geq 0} \Lambda^{4-m} f_{4-m} \int_M a_m (x,D_A^2),
\end{equation}
in terms of the Seeley--De Witt invariants of $D_A^2$.
The coefficients are defined by $f_k := \int t^{-k/2} g(t)dt$.
\end{prop}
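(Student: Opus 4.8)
The plan is to establish the asymptotic heat-kernel expansion that underlies this proposition. The key is the assumption that $f$ is a Laplace transform, $f(x) = \int_{t>0} e^{-tx^2} g(t)\, dt$, which converts the spectral action into a trace of the heat operator of $D_A^2$. Substituting and exchanging the trace with the $t$-integral, I would write
\begin{equation}
\label{sa-heat}
S[A] = \tr f(D_A/\Lambda) = \int_{t>0} \tr\left( e^{-t D_A^2/\Lambda^2} \right) g(t)\, dt .
\end{equation}
Thus the whole problem reduces to controlling the heat trace $\tr(e^{-s D_A^2})$ for the Laplace-type operator $D_A^2$ on the $4$-dimensional flat manifold $M$, and then integrating the resulting expansion against $g(t)$ with $s = t/\Lambda^2$.

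The main analytic input I would invoke is the classical small-time asymptotic expansion of the heat trace for a Laplace-type operator. Since $D_A^2$ is a generalized Laplacian (its principal symbol is the metric), standard heat-kernel theory gives, as $s \downarrow 0$,
\begin{equation}
\label{heat-asymp}
\tr\left( e^{-s D_A^2} \right) \sim \sum_{m \geq 0} s^{(m-4)/2} \int_M a_m(x, D_A^2)\, ,
\end{equation}
where $a_m(x, D_A^2)$ are the Seeley--De Witt (Gilkey) coefficients, local invariants built from the connection $A$ and its curvature; the $4$ in the exponent is the dimension of $M$, and odd coefficients vanish on a manifold without boundary. I would cite this expansion as the established result it is (Gilkey's theorem), rather than rederive it.

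The concluding step is to insert \eqref{heat-asymp} into \eqref{sa-heat} with $s = t/\Lambda^2$, producing a factor $(t/\Lambda^2)^{(m-4)/2} = \Lambda^{4-m} t^{(m-4)/2}$ for each term. Pulling out the $\Lambda$-power and integrating the remaining $t$-dependence against $g(t)$ yields exactly the moment $f_{4-m} = \int t^{-(4-m)/2} g(t)\, dt$, matching the stated definition $f_k = \int t^{-k/2} g(t)\, dt$ with $k = 4-m$. This reproduces \eqref{sa-eym} term by term. The point requiring the most care is the justification of the interchange in \eqref{sa-heat} and the fact that an asymptotic expansion (rather than a convergent one) is preserved under integration against $g$; this is handled by Watson's lemma / a dominated-convergence argument controlling the remainder of \eqref{heat-asymp} uniformly, which is the only genuine obstacle and is standard in this setting.
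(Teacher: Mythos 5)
Your proposal is correct and follows exactly the standard argument behind this proposition (which the paper does not prove itself but cites from Chamseddine--Connes): rewrite $\tr f(D_A/\Lambda)$ as an integral of heat traces via the Laplace-transform assumption on $f$, insert the small-time Seeley--De Witt expansion of $\tr(e^{-sD_A^2})$ in dimension $4$, and integrate term by term against $g$ to identify the moments $f_{4-m}=\int t^{-(4-m)/2}g(t)\,dt$. You also correctly flag the only delicate point, namely the justification of termwise integration of the asymptotic series against $g$, which requires suitable decay of $g$ so that all moments $f_k$ exist.
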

Recall that the Seeley--De Witt coefficients $a_m(x,D_A^2)$ are gauge invariant polynomials in the fields $A_\mu$. Indeed, the Weitzenb\"ock formula gives
$$
D_A^2 =- (\partial_\mu - i A_\mu) (\partial^\mu - i A^\mu)+ i\sum_{\mu < \nu}\gamma_\mu \gamma_\nu F_{\mu\nu}
$$
in terms of the curvature $F_{\mu\nu} = \partial_\mu A_\nu - \partial_\nu A_\mu -i [A_\mu,A_\nu]$ of $A_\mu$. Consequently, a Theorem by Gilkey \cite[Theorem 4.8.16]{Gil84} shows that (in this case) $a_m$ are polynomial gauge invariants in $F_{\mu\nu}$ and its covariant derivatives. The {\it order} $\ord$ of $a_m$ is $m$, where we set on generators:
$$
\ord A_{\mu_1; \mu_2\cdots \mu_k} = k.
$$
Consequently, the curvature $F_{\mu\nu}$ has order $2$, and $F_{\mu_1 \mu_2; \mu_3 \cdots \mu_k}$ has order $k$. For example, $a_4(x,D_A^2)$ is proportional to $\tr F_{\mu\nu}F^{\mu\nu}$ and more generally:
$$
a_{4+2k}(x,D_A^2) =c_k \tr F_{\mu\nu} \Delta^k_A (F^{\mu\nu}) + \cO(F^3)
$$
for some constants $c_k$ and the Laplacian $\Delta_A= -(\partial_\mu - i A_\mu)^2$ (see also \cite{Avr99} and references therein). 
The remainder is of third and higher order in $F$, plus its covariant derivatives, adding up to an order equal to $4+2k$.

\begin{rem}
It is the term $a_4$ that gives rise to the Yang--Mills action functional, the higher-order terms are usually ignored (being proportional to an inverse power of the `cut-off' $\Lambda$). More recently, also the higher-order terms, or even the full spectral action were studied in specific cases in \cite{CC11,MPT10} and from a more general point of view in \cite{Sui10}.
\end{rem}

\begin{prop}
The quadratic term $S_0[A]$ in $S[A]$ is given asymptotically (as $\Lambda \to \infty$) by
$$
S_0[A] \sim \sum_{k \geq 0} \Lambda^{-2k} f_{-2k} c_k \int \tr \hat F_{\mu\nu} \Delta^{k} (\hat F^{\mu\nu})
$$
where we have set $\hat F_{\mu\nu} = \partial_\mu A_\nu - \partial_\nu A_\mu$ and $\Delta= -\partial^\mu \partial_\mu$.
\end{prop}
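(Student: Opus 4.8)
The plan is to extract from the asymptotic expansion \eqref{sa-eym} exactly the part that is second order in the gauge field $A$, since $S_0[A]$ is by definition the quadratic term of $S[A]$. The natural bookkeeping device is the $A$-degree, i.e. the number of factors of $A$, which refines the order grading introduced above; I would track this degree factor by factor throughout.

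First I would observe that only even coefficients are relevant. On the boundaryless manifold $M$ the odd Seeley--De Witt invariants $a_{2j+1}$ vanish, so the sum runs over $m=0,2,4,\ldots$, and in any case a quadratic gauge invariant built from $\hat F$ and derivatives necessarily has even order. Next I would dispose of the two lowest coefficients. The term $a_0$ is field-independent (proportional to the volume) and so contributes nothing quadratic; the term $a_2$ is of order $2$ and, being by Gilkey's theorem a gauge invariant in $F_{\mu\nu}$ and its covariant derivatives, contains at most a single factor of $F$, which cannot be contracted into a Lorentz scalar — equivalently its spinor trace vanishes on the flat background. Hence $a_2$ carries no quadratic piece either, which is what makes the series start at $m=4$, that is, at $k=0$.

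The heart of the computation is the extraction of the quadratic part of $a_{4+2k}$. Using the explicit form $a_{4+2k}(x,D_A^2)=c_k \tr F_{\mu\nu}\Delta_A^k(F^{\mu\nu})+\cO(F^3)$ recalled above, I would split $F_{\mu\nu}=\hat F_{\mu\nu}-i[A_\mu,A_\nu]$ into its $A$-degree-$1$ part $\hat F_{\mu\nu}$ and a degree-$2$ remainder, and likewise split each covariant Laplacian $\Delta_A$ into the free Laplacian $\Delta$ (degree $0$) plus terms of degree $\ge 1$ in $A$. Since each of the two factors of $F$ carries $A$-degree at least $1$ and each $\Delta_A$ carries degree at least $0$, the total $A$-degree is at least $2$, with equality forced precisely when both curvatures are taken at their linear parts $\hat F$ and every Laplacian is replaced by $\Delta$. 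Thus the unique degree-$2$ contribution of $c_k \tr F\Delta_A^k F$ is $c_k \tr \hat F_{\mu\nu}\Delta^k(\hat F^{\mu\nu})$, while the $\cO(F^3)$ remainder, involving at least three factors of $F$, has $A$-degree at least $3$ and drops out.

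Collecting the surviving terms and reading off the prefactors from \eqref{sa-eym} at $m=4+2k$, so that $\Lambda^{4-m}=\Lambda^{-2k}$ and $f_{4-m}=f_{-2k}$, yields the asserted formula. The step I expect to demand the most care is the $A$-degree bookkeeping inside $\Delta_A^k$: one must verify that commuting the gauge field past the derivatives produces no hidden lower-degree contribution and that no degree-$2$ piece is generated from a product of a degree-$1$ Laplacian term with a linear $\hat F$ and a quadratic $-i[A_\mu,A_\nu]$; isolating the minimal-degree part of each factor separately, as above, is the safe route around these cancellation subtleties.
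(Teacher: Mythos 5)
Your argument is correct and is exactly the (implicit) reasoning the paper relies on: the proposition is stated without proof there, following directly from linearizing $F_{\mu\nu}\mapsto\hat F_{\mu\nu}$ and $\Delta_A\mapsto\Delta$ in the displayed formula $a_{4+2k}=c_k\tr F_{\mu\nu}\Delta_A^k(F^{\mu\nu})+\cO(F^3)$ and reading off $\Lambda^{4-m}f_{4-m}$ at $m=4+2k$. Your additional bookkeeping --- vanishing of the odd coefficients, the field-independence of $a_0$, the absence of a quadratic piece in $a_2$, and the check that no degree-$2$ cross terms survive --- only makes explicit what the paper leaves to the reader.
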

We assume that the first term is the usual (free part of the) Yang--Mills action, that is, we adjust the positive function $f$ so that $f_0 c_0 =-1/4$. For the other coefficients, we have the following neat expression.\footnote{The coefficients $f_{2k}$ for positive $k$ were found to be the $k+1$'th moments of $f$, cf. \cite[Sect. 1.11]{CM07} for more details.}
\begin{lma}
The coefficients $f_{-2k}$ are related to the $2k$'th derivatives of $f$ at zero:
$$
f_{-2k} = \frac{(-1)^k f^{(2k)}(0)}{(2k-1)!!}.
$$
\end{lma}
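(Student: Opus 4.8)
The plan is to read off the relation directly from the Laplace-transform representation $f(x)=\int_{t>0}e^{-tx^2}g(t)\,\dd t$, turning the statement into a comparison between the Taylor coefficients of $f$ at the origin and the moments of $g$. The starting observation is that, by the defining formula $f_k=\int t^{-k/2}g(t)\,\dd t$, the coefficient $f_{-2k}$ is exactly the $k$-th moment $\int_{t>0}t^{k}g(t)\,\dd t$ of the weight $g$; so the whole content of the lemma is that these moments are encoded in the even derivatives of $f$ at zero.

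First I would expand the Gaussian, $e^{-tx^2}=\sum_{n\geq 0}\frac{(-1)^n}{n!}t^{n}x^{2n}$, and substitute it into the integral. Interchanging sum and integral gives $f(x)=\sum_{n\geq 0}\frac{(-1)^n}{n!}\Big(\int_{t>0}t^{n}g(t)\,\dd t\Big)x^{2n}=\sum_{n\geq 0}\frac{(-1)^n f_{-2n}}{n!}\,x^{2n}$. Since this series is even in $x$, all odd derivatives of $f$ at $0$ vanish and only the even ones are relevant; comparing the coefficient of $x^{2k}$ with the ordinary Taylor expansion $f(x)=\sum_{k\geq 0}\frac{f^{(2k)}(0)}{(2k)!}x^{2k}$ yields the identity $\frac{f^{(2k)}(0)}{(2k)!}=\frac{(-1)^k f_{-2k}}{k!}$. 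Solving for $f_{-2k}$ and simplifying the factorial prefactor through the elementary relation $(2k)!=2^k\,k!\,(2k-1)!!$ then expresses $f_{-2k}$ in terms of $f^{(2k)}(0)$ and the double factorial $(2k-1)!!$, as asserted.

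The only genuine analytic point --- and the main obstacle --- is justifying the termwise integration, equivalently the differentiation under the integral sign $2k$ times: this needs the moments $\int t^{n}g(t)\,\dd t$ to converge and the partial sums to be dominated uniformly for $x$ near $0$, so that dominated convergence applies. These are precisely the integrability conditions already implicit in assuming $f$ to be such a Laplace transform with the $f_k$ well defined. If one prefers to sidestep the series altogether, the same conclusion follows by differentiating $e^{-tx^2}$ in $x$ exactly $2k$ times, observing that at $x=0$ only a term proportional to $t^{k}(2k-1)!!$ survives, and integrating the result against $g$; this route trades the convergence bookkeeping for a single differentiation-under-the-integral estimate but reproduces the identical moment identity.
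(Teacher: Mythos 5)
Your route is genuinely different from the paper's: the paper differentiates under the integral sign $2k$ times, recognizing $(d/dx)^{2k}e^{-tx^2/2}$ as a Hermite polynomial times the Gaussian and then evaluating at $x=0$ via $H_{2k}(0)=(-1)^k(2k-1)!!$, whereas you expand the Gaussian into its power series, integrate termwise to identify the Taylor coefficients of $f$ with the moments $\int t^k g(t)\,\dd t=f_{-2k}$, and compare. Both are morally the same computation of $f^{(2k)}(0)$, but your version is more elementary (no Hermite identities needed) and makes the analytic hypothesis --- finiteness of the moments of $g$ and the legitimacy of the interchange --- completely explicit, which you rightly flag as the only real issue.

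However, your final step does not close. From $\frac{f^{(2k)}(0)}{(2k)!}=\frac{(-1)^k f_{-2k}}{k!}$ you get $f_{-2k}=(-1)^k\frac{k!}{(2k)!}f^{(2k)}(0)$, and after substituting $(2k)!=2^k\,k!\,(2k-1)!!$ this is $f_{-2k}=\frac{(-1)^k f^{(2k)}(0)}{2^k(2k-1)!!}$, which differs from the asserted identity by the factor $2^k$ in the denominator; writing ``as asserted'' papers over this. To be clear, your intermediate computation is the correct one for the stated normalization $f(x)=\int e^{-tx^2}g(t)\,\dd t$ with $f_k=\int t^{-k/2}g(t)\,\dd t$: the $2k$-th derivative of $e^{-tx^2}$ at $0$ is $(-1)^k t^k(2k)!/k!=(-1)^k t^k 2^k(2k-1)!!$, so the $2^k$ is really there. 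The discrepancy originates in the paper's own proof, which silently differentiates $e^{-tx^2/2}$ (for which the displayed Hermite formula is correct) rather than $e^{-tx^2}$; the clean statement without the $2^k$ holds only under the normalization $f(x)=\int e^{-tx^2/2}g(t)\,\dd t$. So you should either carry the $2^k$ through and state the corrected identity, or note explicitly that the lemma as printed presupposes the $e^{-tx^2/2}$ convention; as written, your last sentence asserts an equality that your own calculation contradicts.
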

\begin{proof}
With $f(x) = \int e^{-tx^2} g(t)dt$ we derive for its derivatives:
$$
f^{(2k)}(x) = \int_{t>0} e^{-tx^2/2} H_{2k}(\sqrt t x) t^{k} g(t)dt
$$
in terms of the Hermite polynomials $H_{n}(x) \equiv (-1)^n e^{x^2/2} (d/dx)^n e^{-x^2/2}$. Evaluating both sides at zero gives the desired result, using in addition that $H_{2k}(0)= (-1)^k (2k-1)!!$.
\end{proof}

We end this section by introducing a formal expansion
$
\phi_\Lambda(\Delta) = (f_0c_0)^{-1} \sum_{k \geq0} \Lambda^{-2k} f_{-2k} c_k  \Delta^k
$ (starting with $1$)
so that we can write more concisely
$$
S_0[A] \sim  -\frac{1}{4} \int \tr \hat F_{\mu\nu} \phi_\Lambda(\Delta) (\hat F^{\mu\nu})
$$
This form motivates the interpretation of $S_0[A]$ (and of $S[A]$) as a higher-derivative gauge theory. As we will see below, this indeed regularizes the theory in such a way that $S[A]$ defines a superrenormalizable field theory.

\section{Gauge fixing in the YM-system}

We add a gauge-fixing term of the following higher-derivative form:
\begin{equation}
\label{sa-gf}
S_\gf[A] \sim - \frac{1}{2 \xi}  \int \partial_\mu A^\mu \phi_\Lambda(\Delta) \left( \partial_\nu A^\nu \right) 
\end{equation}
We derive the {\it propagator} by inverting the non-degenerate quadratic form given by $S_0[A] + S_\gf[A]$:
$$
D_{\mu\nu}^{ab}(p; \Lambda) = \left[ g_{\mu\nu} - (1-\xi) \frac{p_\mu p_\nu}{ (p^2 +i \eta)}\right] \frac{\delta^{ab}}{(p^2 +i \eta) \phi_\Lambda(p^2)} 
$$
which for the moment is a formal expansion in $\Lambda$. We will come back to it in more detail in the next section.

As usual, the above gauge fixing requires a Jacobian, conveniently described by a Faddeev--Popov ghost Lagrangian:
\begin{equation}
\label{sa-gh}
S_\gh[A,\bar C,C] \sim - \int \partial_\mu \bar C \phi_\Lambda(\Delta) \left( \partial^\mu C + [A^\mu,C] \right)
\end{equation}
Here $C,\bar C$ are the Faddeev--Popov ghost fields and their propagator is
$$
\tilde D^{ab}(p; \Lambda) = \frac{\delta^{ab}}{(p^2 + i \eta) \phi_\Lambda(p^2)}.
$$

\begin{prop}
The sum $S[A] + S_\gf[A] + S_\gh[A,\bar C, C]$ is invariant under the BRST-transformations:
\begin{gather}
\label{brst}
sA_\mu = \partial_\mu C + [A_\mu,C];\qquad s C = -\half [C,C]; \qquad s \bar C =  \xi^{-1} \partial_\mu A^\mu.
\end{gather}
\end{prop}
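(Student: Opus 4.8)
The plan is to split the variation as $s\bigl(S[A]+S_\gf[A]+S_\gh[A,\bar C,C]\bigr)=sS[A]+s\bigl(S_\gf[A]+S_\gh[A,\bar C,C]\bigr)$ and to treat the two pieces separately. For the first piece I would argue $sS[A]=0$ directly from gauge invariance. Since $sA_\mu=\partial_\mu C+[A_\mu,C]$ is exactly an infinitesimal gauge transformation with Grassmann-odd parameter $C$, and since $S[A]$ depends on $A$ only through the gauge-invariant polynomials $a_m(x,D_A^2)$ in $F_{\mu\nu}$ and its covariant derivatives, its BRST variation coincides with the gauge variation of $S$, which vanishes. Concretely, $sS[A]=\int \tfrac{\delta S}{\delta A_\mu}\,(\partial^\mu C+[A^\mu,C])$, and this integral is zero by the Noether identity associated with gauge invariance. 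This reduces the proposition to the single identity $s\bigl(S_\gf+S_\gh\bigr)=0$.

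For the second piece the key observation is that, apart from the insertion of the self-adjoint constant-coefficient operator $\phi_\Lambda(\Delta)$, this is the textbook cancellation between the gauge-fixing and Faddeev--Popov terms. First I would compute $sS_\gf$, using $s(\partial_\mu A^\mu)=\partial_\mu(\partial^\mu C+[A^\mu,C])$ together with the self-adjointness of $\phi_\Lambda(\Delta)$ under the integral to symmetrize the two resulting terms, obtaining $sS_\gf=-\xi^{-1}\int \partial_\mu\!\bigl(\partial^\mu C+[A^\mu,C]\bigr)\,\phi_\Lambda(\Delta)\,\partial_\nu A^\nu$. Next I would vary $S_\gh$ as a graded derivation. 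The term in which $s$ hits $\bar C$ produces, using $s\bar C=\xi^{-1}\partial_\mu A^\mu$ and one integration by parts, precisely $+\xi^{-1}\int \partial_\nu A^\nu\,\phi_\Lambda(\Delta)\,\partial_\mu\!\bigl(\partial^\mu C+[A^\mu,C]\bigr)$, which cancels $sS_\gf$. It is here that the construction is rigged: the definition $s\bar C=\xi^{-1}\partial_\mu A^\mu$ is chosen exactly to produce this cancellation, and it survives the higher-derivative modification only because the same operator $\phi_\Lambda(\Delta)$ appears in both $S_\gf$ and $S_\gh$ and commutes with $\partial_\mu$, being a polynomial in $\Delta=-\partial^\mu\partial_\mu$.

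The only remaining contribution is the term in which $s$ acts on $\partial^\mu C+[A^\mu,C]$ inside $S_\gh$, namely $\int \partial_\mu \bar C\,\phi_\Lambda(\Delta)\,s\bigl(\partial^\mu C+[A^\mu,C]\bigr)$. Here $s\bigl(\partial^\mu C+[A^\mu,C]\bigr)=s(sA^\mu)=s^2A^\mu$, and I would invoke nilpotency of the BRST differential on the gauge field, $s^2A_\mu=0$, which follows from the Jacobi identity once $sC=-\half[C,C]$ is used; this kills the last term and completes the proof. I expect the main obstacle to be bookkeeping rather than conceptual: one must track the Grassmann signs in the graded Leibniz rule $s(XY)=(sX)Y+(-1)^{|X|}X(sY)$ and in the (super)cyclicity of the trace, and one must check that the higher-derivative factor $\phi_\Lambda(\Delta)$ can be freely transposed across the bilinear pairings and commuted through $\partial_\mu$. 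Since $\phi_\Lambda(\Delta)$ is a formal power series in $\Delta$ with constant coefficients it is both self-adjoint under $\int$ and translation invariant, so none of these manipulations is obstructed and the classical Yang--Mills cancellations survive verbatim. I would also note that $s$ fails to be nilpotent on $\bar C$ off-shell, since $s^2\bar C\propto\partial_\mu(\partial^\mu C+[A^\mu,C])$, but this is immaterial for the invariance of the action.
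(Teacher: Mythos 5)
Your proposal is correct and follows essentially the same route as the paper: $sS=0$ by gauge invariance, and $sS_\gf$ cancels against the $\bar C$-variation of $S_\gh$ after integration by parts, using that $\phi_\Lambda(\Delta)$ is self-adjoint and commutes with $\partial_\mu$. You are in fact slightly more explicit than the paper in noting that the remaining term of $sS_\gh$ vanishes because $s^2A_\mu=0$, a step the paper leaves implicit.
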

\begin{proof}
First, $s(S)=0$ because of gauge invariance of $S[A]$. We compute 
$$
s(S_\gf) = -\frac{1}{\xi} \int ( \partial_\mu A^\mu) \phi_\Lambda(\Delta) \left( \partial_\nu \partial^\nu C + \partial_\nu( [A^\nu,C] \right) 
$$
On the other hand,
$$
s(S_\gh) = - \frac{1}{\xi} \int (\partial_\mu \partial^\nu A_\nu )\phi_\Lambda(\Delta) \left( \partial^\mu C+ [A^\mu,C] \right)
$$
which modulo vanishing boundary terms is minus the previous expression.
\end{proof}
Note that $s^2 \neq 0$, which can be cured by standard homological methods: introduce an auxiliary (aka Nakanishi-Lautrup) field $h$ so that $\bar C$ and $h$ form a contractible pair in BRST-cohomology. In other words, we replace the above transformation in \eqref{brst} on $\bar C$ by $s \bar C = --h$ and $s h = 0$. If we replace $S_\gf + S_\gh$ by $s \Psi$ with $\Psi$ an arbitrary {\it gauge fixing fermion}, it follows from gauge invariance of $S$ and nilpotency of $s$ that $S + s \Psi$ is BRST-invariant. The above special form of $S_\gf+ S_\gh$ can be recovered by choosing
$$
\Psi =-  \int \phi_\Lambda(\Delta) (\bar C) \left( \half \xi h + \partial_\mu A^\mu \right).
$$

\begin{rem}
One might wonder what gauge fixing condition is implemented by $S_\gf$ as in \eqref{sa-gf}, given the presence of the term $\phi_\Lambda(\Delta)$. Under suitable conditions on the function $f$, the function $x \mapsto \phi_\Lambda(x)$ is positive, turning the bilinear form
$$
(\omega_1,\omega_2) := - \int \tr \omega_1 \wedge \ast (\phi_\Lambda(\Delta) \omega_2)
$$
into an inner product. On the Lagrangian level, we can equally well implement the Lorenz gauge fixing condition $\partial \cdot A = 0$ using this inner product instead of the usual $L^2$-inner product. This gives rise to $S_\gf[A] =  ( \partial \cdot A, \partial \cdot A)/2\xi$. Similarly, $S_\gf$ is given by the inner product $(\bar C, \partial_\mu C + [A_\mu,C])$.
\end{rem}

\section{Renormalization of the spectral action for the YM-system}
As said, we consider the spectral action for the Yang--Mills system as a higher-derivative field theory. This means that we will use the higher derivatives of $F_{\mu\nu}$ that appear in the asymptotic expansion as natural regulators of the theory, similar to \cite{Sla71,Sla72b} (see also \cite[Sect. 4.4]{FS80}). However, note that the regularizing terms are already present in the spectral action $S[A]$ and need not be introduced as such.
Let us consider the expansion \eqref{sa-eym} up to order $n$ (which we assume to be at least $8$), {\it i.e.} we set $f_{4-m} = 0$ for all $m > n$. Also, assume a gauge fixing of the form \eqref{sa-gf} and \eqref{sa-gh}.

Then, we easily derive from the structure of $\phi_\Lambda(p^2)$ that the propagators of both the gauge field and the ghost field behave as $|p|^{-n+2}$ as $|p| \to \infty$. Indeed, in this case:
$$
\phi_\Lambda(p^2) = \sum_{k=0}^{n/2-2} \Lambda^{-2k} f_{-2k} c_k p^{2k}.
$$

Moreover, the weights of the interaction in terms of powers of momenta is given by:
$$
\begin{array}{|c|c|c|}
\hline
\text{vertex} & \text{valence} & \max \# \text{ der}\\
\hline
\hline
\parbox{2cm}{\vspace{1mm}\includegraphics[scale=.2]{./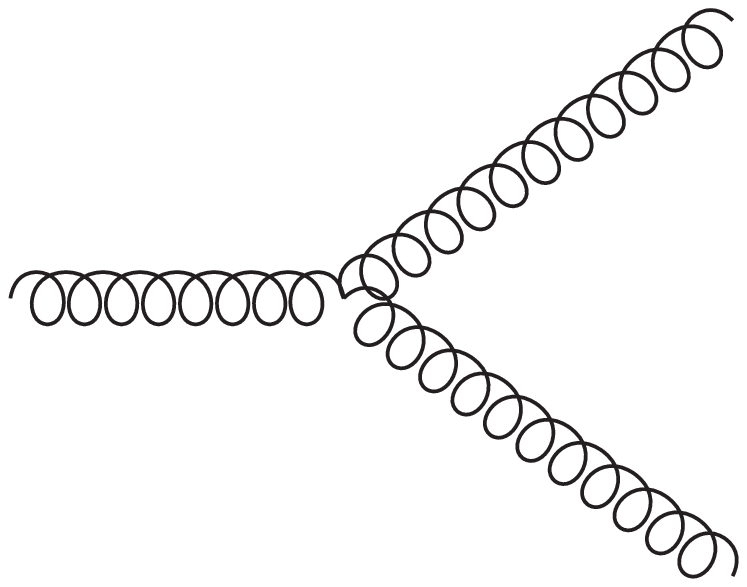}} & 3 & n-3\\[2mm]
\parbox{2cm}{\includegraphics[scale=.2]{./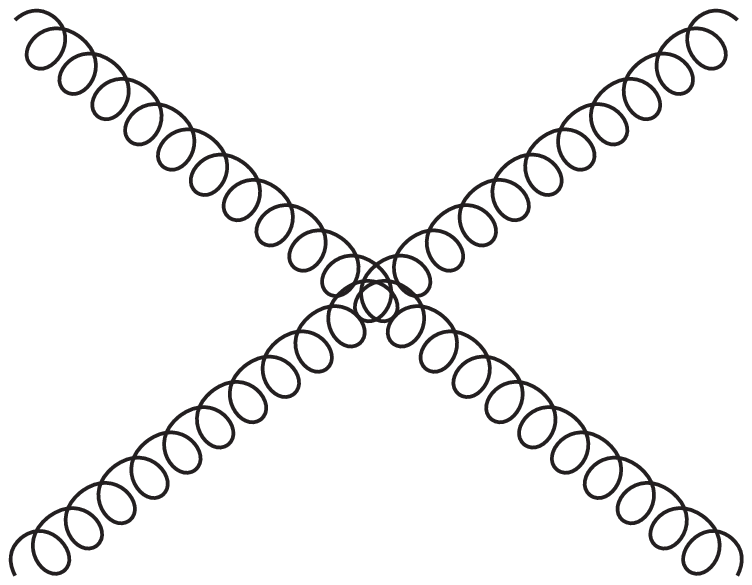}} & 4 & n-4\\[2mm]
\vdots& \vdots & \vdots\\
\parbox{2cm}{\includegraphics[scale=.2]{./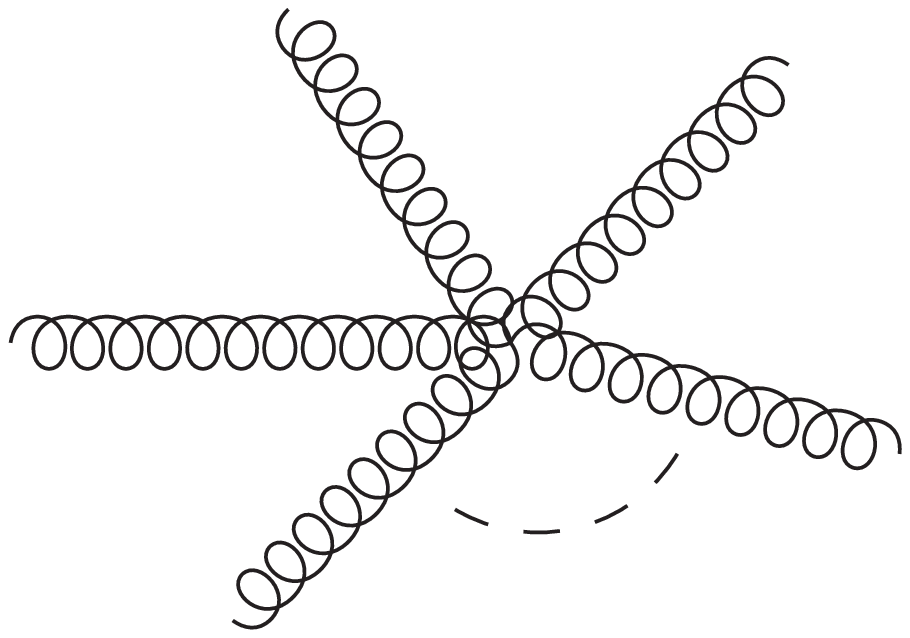}} & n & 0 \\[2mm]
\parbox{2cm}{\includegraphics[scale=.2]{./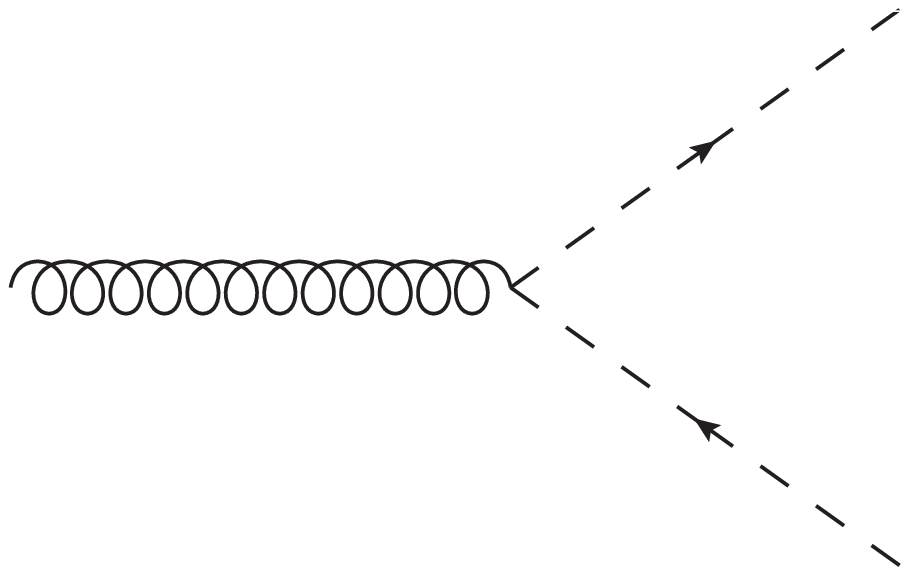}} & 3 & n-3\\[2mm]
\hline
\end{array}
$$
We will use $v_k$ to indicate the number of gauge interaction vertices of valence $k$, and with $\tilde v$ the number of ghost-gauge vertices.

Let us now find an expression for the {\it superficial degree of divergence} $\omega$ of a graph consisting of $I$ internal gauge edges, $\tilde I$ internal ghost edges, $v_k$ valence $k$ gauge vertices and $\tilde v$ ghost-gauge vertices. In 4 dimensions, we find at loop order $L$:
$$
\omega = 4L - I(n-2) - \tilde I (n-2) + \sum_{i=3}^n v_i (n-i) + \tilde v (n-3).
$$
\begin{lma}
Let $E$ and $\tilde E$ denote the number of external gauge and ghost edges, respectively. The superficial degree of divergence of the graph equals
$$
\omega = (4-n)(L-1) + 4 - (E+\tilde E).
$$
\end{lma}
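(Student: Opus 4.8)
The plan is to reduce the given combinatorial expression for $\omega$ to the claimed closed form purely by invoking the standard topological identities satisfied by any Feynman graph. First I would record the three relations that tie together the numbers of edges, vertices and loops. Writing $V = \sum_{i=3}^n v_i + \tilde v$ for the total number of vertices, the Euler-type loop relation reads $L = I + \tilde I - V + 1$. Next, counting the gauge half-edges (legs) attached to vertices in two ways — each valence-$i$ gauge vertex carries $i$ gauge legs and each ghost-gauge vertex carries a single gauge leg, while every internal gauge edge accounts for two legs and every external one for a single leg — gives $\sum_{i=3}^n i\, v_i + \tilde v = 2I + E$. Finally, since the ghost-gauge vertex is trivalent with exactly two ghost legs (one $C$ and one $\bar C$), the analogous count for ghost half-edges yields $2\tilde v = 2\tilde I + \tilde E$.

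With these in hand the computation is a matter of substitution. I would expand the two vertex sums in the definition of $\omega$ as $\sum_i v_i(n-i) = n\sum_i v_i - \sum_i i\, v_i$ and $\tilde v(n-3) = n\tilde v - 3\tilde v$, collect the terms proportional to $n$ into $nV$, and then use the gauge-leg identity to replace $\sum_i i\, v_i$ by $2I + E - \tilde v$ and the ghost-leg identity to replace $2\tilde v$ by $2\tilde I + \tilde E$. After cancellation this collapses to
$$
\omega = 4L - n(I + \tilde I - V) - (E + \tilde E).
$$
Invoking the loop relation $I + \tilde I - V = L - 1$ then gives $\omega = (4-n)L + n - (E + \tilde E)$, which equals the stated $(4-n)(L-1) + 4 - (E+\tilde E)$ upon noting $(4-n)(L-1) + 4 = (4-n)L + n$.

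I do not expect a genuine obstacle here; the result is a bookkeeping identity. The only point that demands care — and the one most likely to produce sign or counting errors — is the asymmetric leg count at the ghost-gauge vertex: it is trivalent, so it appears in the $\tilde v(n-3)$ term on the same footing as the cubic gauge vertex, yet it feeds a single leg into the gauge-leg balance and two legs into the ghost-leg balance. Keeping these two bookkeeping channels separate is exactly what makes the $n$-dependent pieces assemble into $I + \tilde I - V$ rather than leaving a residual dependence on the individual vertex counts.
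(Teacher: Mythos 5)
Your proof is correct and follows essentially the same route as the paper: the two half-edge counting identities substituted into the definition of $\omega$, followed by Euler's formula $L = I + \tilde I - V + 1$ (which the paper states with a typographical sign error on the $1$, while you use the correct form). The only difference is that you spell out the bookkeeping at the ghost-gauge vertex explicitly, which the paper leaves implicit.
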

\begin{proof}
We use the relations
$$
2 I + E = \sum_i i v_i + \tilde v; \qquad
2 \tilde I + \tilde E = 2\tilde v 
$$
where $E$ and $\tilde E$ are the number of external gauge and ghost legs, respectively. Indeed, these formulas count the number of half (gauge/ghost) edges in a graph in two ways: from the number of edges and from the valences of the vertices. We use them to substitute for $2I$ and $2\tilde I$ in the above expression for $\omega$ so as to obtain
$$
\omega = 4L - In - \tilde I n + n \left(\sum_i v_i + \tilde v \right) - (E + \tilde E)
$$
from which the result follows at once from Euler's formula $L= I + \tilde I - \sum_i v_i - \tilde v -1$.
\end{proof}
As a consequence, $\omega < 0$ if $L \geq 2$ (provided $n \geq 8$): all Feynman graphs are finite at loop order greater than 1. If $L=1$, then there are finitely many graphs which are divergent, namely those for which $E+ \tilde E \leq 4$. We conclude that the spectral action for the Yang--Mills system is superrenormalizable. 

Of course, the spectral action being a gauge theory, there is more to renormalizability than just power counting: we have to establish gauge invariance of the counterterms. 
We already know that the counterterms needed to render the perturbative quantization of the spectral action finite are of order $4$ or less in the fields and arise only from one-loop graphs. The key property of the effective action at one loop is that it is BRST-invariant:
$$
s(\Gamma_{1}) = 0. 
$$
In particular, assuming a regularization compatible with gauge invariance, the divergent part $\Gamma_{1,\infty}$ is BRST-invariant. Results from \cite{Dix91, DTV85, DTV85b, BDK90,DHTV91} on BRST-cohomology for Yang--Mills type theories ascertain that the only BRST-closed functional of order 4 or less in the fields is represented by
$$
\delta Z \int F_{\mu\nu}F^{\mu\nu}
$$
for some constant $\delta Z$. The counterterm $\Gamma_{1,\infty}$ can thus be added to $S$ and absorbed by a redefinition of the fields and coupling constant:
\begin{gather*}
A_0 = \sqrt{1+\delta Z} A ; \qquad g_0 = \frac{g}{\sqrt{1+ \delta Z}}
\end{gather*}
Equivalently, one could leave $A$ and $g$ invariant, and redefine $f_0 \mapsto (1+\delta Z) f_0$, leaving all other coefficients $f_{4-m}$ invariant. Intriguingly, renormalization of the spectral action for the YM-system can thus be accomplished merely by shifting the function $f$ in such a way that $f(0) \mapsto (1+\delta Z) f(0)$, whilst leaving all its higher derivatives at $0$ invariant. 

\begin{rem}
The above form for $\Gamma_{1,\infty}$ can actually be established by an explicit computation in dimensional regularization following \cite{PS96, PS97}. We intend to present the full details  elsewhere.
\end{rem}

\section*{Acknowledgements}
I would like to thank Klaas Landsman for useful discussions and remarks. I am grateful to Dmitri Vassilevich for useful comments. NWO is acknowledged for support under VENI-project 639.031.827.


\begin{thebibliography}{10}

\bibitem{Avr99}
I.~Avramidi.
\newblock {Covariant techniques for computation of the heat kernel}.
\newblock {\em Rev.Math.Phys.} 11 (1999)  947--980.

\bibitem{BDK90}
F.~Brandt, N.~Dragon, and M.~Kreuzer.
\newblock Lie algebra cohomology.
\newblock {\em Nucl. Phys.} B332 (1990)  250.

\bibitem{CC96}
A.~H. Chamseddine and A.~Connes.
\newblock Universal formula for noncommutative geometry actions: {U}nifications
  of gravity and the standard model.
\newblock {\em Phys. Rev. Lett.} 77 (1996)  4868--4871.

\bibitem{CC97}
A.~H. Chamseddine and A.~Connes.
\newblock The spectral action principle.
\newblock {\em Commun. Math. Phys.} 186 (1997)  731--750.

\bibitem{CC10}
A.~H. Chamseddine and A.~Connes.
\newblock {Noncommutative Geometry as a Framework for Unification of all
  Fundamental Interactions including Gravity. Part I}.
\newblock {\em Fortsch. Phys.} 58 (2010)  553--600.

\bibitem{CC11}
A.~H. Chamseddine and A.~Connes.
\newblock {Noncommutative Geometry as a Framework for Unification of all
  Fundamental Interactions including Gravity. Part II}.
\newblock To appear.

\bibitem{CCM07}
A.~H. Chamseddine, A.~Connes, and M.~Marcolli.
\newblock {Gravity and the standard model with neutrino mixing}.
\newblock {\em Adv. Theor. Math. Phys.} 11 (2007)  991--1089.

\bibitem{C94}
A.~Connes.
\newblock {\em Noncommutative Geometry}.
\newblock Academic Press, San Diego, 1994.

\bibitem{CM07}
A.~Connes and M.~Marcolli.
\newblock {\em Noncommutative Geometry, Quantum Fields and Motives}.
\newblock AMS, Providence, 2008.

\bibitem{Dix91}
J.~A. Dixon.
\newblock {Calculation of BRS cohomology with spectral sequences}.
\newblock {\em Commun. Math. Phys.} 139 (1991)  495--526.

\bibitem{DTV85}
M.~Dubois-Violette, M.~Talon, and C.~M. Viallet.
\newblock {BRS} algebras: Analysis of the consistency equations in gauge
  theory.
\newblock {\em Commun. Math. Phys.} 102 (1985)  105.

\bibitem{DTV85b}
M.~Dubois-Violette, M.~Talon, and C.~M. Viallet.
\newblock Results on {BRS} cohomology in gauge theory.
\newblock {\em Phys. Lett.} B158 (1985)  231.

\bibitem{DHTV91}
M.~Dubois-Violette, M.~Henneaux, M.~Talon, and C.-M. Viallet.
\newblock {Some results on local cohomologies in field theory}.
\newblock {\em Phys. Lett.} B267 (1991)  81--87.

\bibitem{EGV98}
R.~Estrada, J.~M. Gracia-Bond{\'\i}a, and J.~C. V\'arilly.
\newblock On summability of distributions and spectral geometry.
\newblock {\em Commun. Math. Phys.} 191 (1998)  219--248.

\bibitem{FS80}
L.~Faddeev and A.~Slavnov.
\newblock {\em Gauge Fields. Introduction to Quantum Theory}.
\newblock Benjaming Cummings, 1980.

\bibitem{Gil84}
P.~B. Gilkey.
\newblock {\em Invariance theory, the heat equation, and the {A}tiyah-{S}inger
  index theorem}, volume~11 of {\em Mathematics Lecture Series}.
\newblock Publish or Perish Inc., Wilmington, DE, 1984.

\bibitem{MPT10}
M.~Marcolli, E.~Pierpaoli, and K.~Teh.
\newblock {The spectral action and cosmic topology}, 1005.2256.

\bibitem{PS96}
P.~I. Pronin and K.~V. Stepanyantz.
\newblock {One-loop effective action for an arbitrary theory}.
\newblock {\em Teor. Mat. Fyz.} 109 (1996)  215--231.

\bibitem{PS97}
P.~I. Pronin and K.~V. Stepanyantz.
\newblock {One-loop counterterms for higher derivative regularized
  Lagrangians}.
\newblock {\em Phys. Lett.} B414 (1997)  117--122.

\bibitem{Sla71}
A.~A. Slavnov.
\newblock {Invariant regularization of nonlinear chiral theories}.
\newblock {\em Nucl. Phys.} B31 (1971)  301--315.

\bibitem{Sla72b}
A.~A. Slavnov.
\newblock {Invariant regularization of gauge theories}.
\newblock {\em Teor. Mat. Fiz.} 13 (1972)  174--177.

\bibitem{Sui10}
W.~D. van Suijlekom.
\newblock Perturbations and operator trace functions.
\newblock To appear in {\it J. Funct. Anal.}

\bibitem{Var06}
J.~C. V\'arilly.
\newblock {\em {An Introduction to Noncommutative Geometry}}.
\newblock European Math. Soc. Publishing House (EMS Series of Lectures in
  Mathematics), 2006.

\end{thebibliography}
\newcommand{\noopsort}[1]{}\def\cprime{$'$}

\end{document}